\newtheorem{counter}{Counter}
\theoremstyle{definition}
\newtheorem{definition}[counter]{Definition}
\theoremstyle{plain}
\newtheorem{theorem}[counter]{Theorem}
\theoremstyle{remark}
\newcommand*{\fancyrefexlabelprefix}{ex}
\newcommand*{\frefexname}{\text{example}}
\newcommand*{\Frefexname}{\text{Example}}
\newcommand*{\fancyrefdeflabelprefix}{def}
\newcommand*{\frefdefname}{\text{definition}}
\newcommand*{\Frefdefname}{\text{Definition}}
\newcommand*{\fancyrefthmlabelprefix}{thm}
\newcommand*{\frefthmname}{\text{theorem}}
\newcommand*{\Frefthmname}{\text{Theorem}}
\newcommand*{\fancyrefremlabelprefix}{rem}
\newcommand*{\frefremname}{\text{remark}}
\newcommand*{\Frefremname}{\text{Remark}}
\newcommand*{\fancyreflemlabelprefix}{lem}
\newcommand*{\freflemname}{\text{lemma}}
\newcommand*{\Freflemname}{\text{Lemma}}
\newcommand*{\fancyrefsubseclabelprefix}{subsec}
\newcommand*{\frefsubsecname}{\text{subsection}}
\newcommand*{\Frefsubsecname}{\text{Subsection}}
\newcommand*{\fancyrefcorlabelprefix}{cor}
\newcommand*{\frefcorname}{\text{corollary}}
\newcommand*{\Frefcorname}{\text{Corollary}}
\newcommand*{\fancyrefalgolabelprefix}{algo}
\newcommand*{\frefalgoname}{\text{algorithm}}
\newcommand*{\Frefalgoname}{\text{Algorithm}}
\newcommand*{\fancyrefproplabelprefix}{prop}
\newcommand*{\frefpropname}{\text{proposition}}
\newcommand*{\Frefpropname}{\text{Proposition}}
\newcommand\myLL[1]{\LeftLabel{$^{\text{#1}}$}}
\newcommand\tabProb{\textbf{prob}}
\newcommand\Prop{\mathsf{Prop}}
\newcommand\Nat{\mathbb{N}}
\newcommand\Rat{\mathbb{Q}}
\newcommand\Tm{\mathsf{Tm}}
\newcommand\LLPPOne{\mathcal{L}_{\LPPOne}}
\newcommand\LJ{\mathcal{L}_{\J}}
\newcommand\axP{\mathsf{(P)}}
\newcommand\axJ{\mathsf{(J)}}
\newcommand\ANE{\mathsf{(AN!)}}
\newcommand\PI{\mathsf{(PI)}}
\newcommand\WE{\mathsf{(WE)}}
\newcommand\LE{\mathsf{(LE)}}
\newcommand\DIS{\mathsf{(DIS)}}
\newcommand\UN{\mathsf{(UN)}}
\newcommand\MP{\mathsf{(MP)}}
\newcommand\CE{\mathsf{(CE)}}
\newcommand\ST{\mathsf{(ST)}}
\newcommand\PPJ{\mathsf{PPJ}}
\newcommand\J{\mathsf{J}}
\newcommand\SFour{\mathsf{S4}}
\newcommand\D{\mathsf{D}}
\newcommand\K{\mathsf{K}}
\newcommand\LPPOne{\mathsf{LPP_1}}
\newcommand\LPlogic{\mathsf{LP}}
\newcommand\system{\mathcal{S}}
\newcommand\pspace{\mathsf{PSPACE}}
\newcommand\PPJMeas{{\PPJ}_{\mathsf{, Meas}}}
\newcommand\LPPOneMeas{\mathsf{LPP_{1, Meas}}}
\newcommand\true{\mathsf{T}}
\newcommand\false{\mathsf{F}}
\newcommand\powerset{\mathcal{P}}
\newcommand\evid{\mathcal{E}}
\newcommand\myvec[1]{\boldsymbol{#1}}
\title{The Complexity of 
Probabilistic Justification Logic}
\author{Ioannis Kokkinis\\
LORIA, CNRS-University of Lorraine, 
Nancy, France\\
\texttt{ioannis.kokkinis@loria.fr}}
\begin{document}

\maketitle

\begin{abstract}
Probabilistic justification logic is a modal logic with two
kind of modalities: probability measures
and explicit justification terms.
We present a tableau procedure that can be used
to decide the satisfiability problem for this
logic in polynomial space. We show that this
upper complexity bound is tight.
\end{abstract}

\section{Introduction}

Following \cite{ograma09} we can define
a probabilistic version of a base
logic by enriching the language of the base logic with 
probabilistic operators. The
probabilistic operators create formulas
of the form $P_{\geq s} A$ which
read as ``$A$ holds with probability at least $s$''.
The models
of these probabilistic logics are probability spaces
which have models of the base logic as states.
In order to obtain a sound and complete axiomatization
the usual axioms for probability are combined with
the axioms of the base logic~\cite{ograma09}. 

Artemov developed the first 
justification  logic, the Logic of Proofs
($\LPlogic$), 
to provide intuitionistic logic with a 
classical provability
semantics~\cite{Art95TR,Art01BSL}.
In \cite{Art01BSL}
it was proved that any theorem of modal
logic $\SFour$ can be
translated into a theorem of
$\LPlogic$ by replacing any occurrence
of the modal operator $\Box$ with an
appropriate explicit justification term
and that any theorem in $\LPlogic$
can be translated into a theorem in $\SFour$ by
replacing any occurrence of a
justification term with a $\Box$.
In the same way explicit
counterparts for several modal logics were
found~\cite{Bre00TR}.
For example the justification logic $\J$
is the explicit counterpart of the minimal
modal logic $\K$.

In~\cite{koogst} a probabilistic
justification logic, $\PPJ$, is defined over the basic
justification logic $\J$~\cite{Bre00TR}.
In this paper we present a tableau procedure
that can be used to decide the satisfiability
problem in $\PPJ$.
This procedure uses a rule that is applied to all
the formulas that appear in the scope of
some probabilistic operator in a tableau branch.
The rule creates exponentially many branches,
however by applying a theorem from the theory of linear systems
we show that only polynomially many branches are needed in order
to decide the satisfiability of a given formula.
This way we can decide the satisfiability problem for
$\PPJ$ in polynomial
space. We show that our upper bound is tight
via a reduction from modal 
logic $\D$, which is the modal logic
that is complete with respect to serial Kripke
structures.

\section{A Probabilistic Logic over Classical Propositional Logic}

Let $\Prop$ be a countable set of atomic
propositions. The logic $\LPPOne$ is defined in
\cite{ograma09} over the language $\LLPPOne$:
\[
A ::=  p ~|~ P_{\geq s} A ~|~ \lnot A ~|
~ A \land A
\]
where\footnote{$\Rat$ denotes the set of
rational numbers.}
$s \in \Rat \cap [0,1]$ and $p \in \Prop$. We also use the following abbreviations:
\begin{align*}
P_{< s} A \equiv \lnot P_{\geq s} A~,~
P_{\leq s} A \equiv  P_{\geq 1 - s} \lnot 
A ~,~
P_{> s} A \equiv \lnot P_{\leq s} 
A \text{ and }
P_{= s} A \equiv P_{\geq s} A \land 
P_{\leq s} A~.
\end{align*}

The axiom schemata and the derivation rules
of the logic $\LPPOne$ are presented in \Fref{tab:LPP1}.
Axiom $\PI$ corresponds to the fact that the
probability of truthfulness of every
formula is at least $0$. 
Axioms $\WE$ and $\LE$ describe some properties of
inequalities.
Axioms $\DIS$ and $\UN$ correspond to the additivity of
probabilities for disjoint events.
The rule $\CE$ is the probabilistic analogue
of the modal necessitation rule and
the rule $\ST$ informally says that if the probability
of a formula is arbitrarily close to $s$ then it
is at least $s$. $\ST$ corresponds to the Archimedean
property of the real numbers.

\begin{table}[ht]
\centering
\renewcommand*{\arraystretch}{1.25}
\begin{tabular}{|c l|}
\hline 
& Axiom Schemata: \\
$\axP$ & finitely many axioms schemata for classical propositional logic\\
$\PI$ & $\vdash P_{\geq 0} A$\\
$\WE$ & $\vdash P_{\leq r} A \to P_{< s} A$, where $s > r$\\
$\LE$ & $\vdash P_{< s} A \to P_{\leq s} A$\\
$\DIS$ & $\vdash  P_{\geq r} A \land P_{\geq s} B \land P_{\geq 1} \lnot (A \land B) \to P_{\geq \min(1, r+s)} (A \lor B)$\\
$\UN$ & $\vdash P_{\leq r} A \land P_{< s} B \to P_{<r+s} (A \lor B)$, where $r+s \leq 1$\\
& Derivation Rules:\\
$\MP$ & if $T \vdash A$ and $T \vdash A \to B$ then
$T \vdash B$\\
$\CE$ & if $\vdash A$  then $\vdash P_{\geq 1 } A$\\
$\ST$ & if $T \vdash A \to P_{\geq s - \frac{1}{k} } B$
for every integer $k \geq \frac{1}{s}$ and $s > 0$
then $T \vdash A \to P_{\geq s } B$\\
\hline
\end{tabular}
\caption{\label{tab:LPP1} Axiom Schemata
and Derivation Rules of $\LPPOne$}
\end{table}

A probability space is a triple
$\langle W, H, \allowbreak \mu\rangle$,
where $W$ is a non-empty set of states,
$H \subseteq \powerset(W)$ 
($\powerset$ stands for powerset)
is closed under finite union and complementation 
and $\mu : H \to [0,1]$ such that
$\mu(W) = 1$ and for any disjoint $U$ and $V$
in $H$, $\mu(U \cup V) = \mu (U) + \mu(V)$.
The models for $\LPPOne$ are probability
spaces where the states contain 
truth assignments
and probability
spaces (so that we can deal with iterated probabilities).

\begin{definition}[$\LPPOne$-Model]
\label{def:LPP1model}
An \emph{$\LPPOne$-model} is a quintuple
$M = \langle U, W, H, \mu , v \rangle$ where:
\begin{enumerate}
\item
$U$ is a non-empty set of objects called worlds;
\item
$W, H, \mu$ and $v$ are functions, which have $U$
as their domain,
such that for every $w \in U$:
\begin{center}
$\langle W_w, H_w, \mu_w\rangle$ is
a probability space with $W_w \subseteq U$
and $v_w : \Prop \to \{ \true, \false\}$,
where $\true$ ($\false$) stand for true (false).
\end{center}
\end{enumerate}
\end{definition}

\begin{definition}[Satisfiability in an $\LPPOne$-model]
\label{def:LPPOne_meas_sat}
Let $M = \langle U, W, H, \mu,v \rangle$ be an
$\LPPOne$-model. Satisfiability is defined
as follows (the propositional cases are treated
classically):
\begin{align*}
M, w \models p &\quad\Longleftrightarrow\quad v_w (p) = \true \quad\text{ for $p \in \Prop$}~;\\
M, w \models P_{\geq s} B &\quad \Longleftrightarrow\quad
\Big ( \mu_w \big ( [A]_{M,w} \big ) \geq s \Big ),
\text{ where } 
[A]_{M,w} = \{ u \in W_w ~|~ M,u \models A \}~.
\end{align*}
\end{definition}

Let $M=\langle U, W, H, \mu , v \rangle$
be an $\LPPOne$-model.
$M$ will be called measurable if
for every $w \in U$ and for every $A \in 
\LLPPOne$, $[A]_{M,w} \in H_w$.
In the rest of the paper we restrict ourselves
to measurable models. $\LPPOneMeas$ denotes the class 
of $\LPPOne$-measurable models.

Soundness and strong completeness for $\LPPOne$ with
respect to $\LPPOneMeas$ is proved in~\cite{ograma09}.
Assume that $A_1, \ldots, A_k$ are the
subformulas of some
$A \in \LLPPOne$. A formula of the form
$\pm A_1 \land \ldots \land \pm A_k$,
where $\pm A_i$ is either
$A_i$ or $\lnot A_i$, will be called an atom of $A$. 
In an atom the order of the conjuncts does not matter. So, 
two atoms are considered the same if they have the same 
conjuncts.
$|A|$ is defined as the number of
symbols that are used in order to write $A$ (where all
rational numbers are assumed to have size $1$).
For $A \in \LLPPOne$, $||A||$ is
the biggest size of a rational number that appears in $A$ (where the size of a rational number is equal to the sum of the lengths
of the binary representations of its numerator and denominator, when the rational number
is written as an irreducible fraction).

As we mentioned in the introduction,
a well known theorem
from the theory of linear systems is necessary for our results. We present
this theorem as \Fref{thm:linear_system_thm}.
This result is stated (and proved) for the purposes of probabilistic logic as Theorem 5.1.5. in \cite{kok16phd}.
The interesting part of
\Fref{thm:linear_system_thm} is
proved in~\cite[p. 145]{chvatal83}.

\begin{theorem}
\label{thm:linear_system_thm}
Let $\system$ be a linear system of
$n$ variables and of
$r$ linear equalities and/or 
inequalities with integer coefficients
each of size at most $l$.
Assume that the vector \mbox{$\myvec{x} = x_1, \ldots, x_n$}
is a solution of $\system$ such that for all $i \in 
\{ 1, \ldots, n\}$, $x_i \geq 0$.
Then there is a vector $\myvec{x^*} = x^*_1, \ldots, x^*_n$
with the following properties:
\begin{enumerate}[label=(\arabic*), topsep = 0.3em]
\item
$\myvec{x^*}$ is a solution of $\system$ and
at most $r$ entries of $\myvec{x^*}$ are positive;
\item
for all $i$, $x^*_i$ is a non-negative
rational number with size bounded 
by
$2 \cdot \big ( r \cdot l+ r \cdot \log_2 (r) + 1 
\big )$;
\item
for all $i \in \{ 1, \ldots, n\}$, if $x^*_i > 0$ then $x_i > 0$.
\end{enumerate}
\end{theorem}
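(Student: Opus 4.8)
The plan is to turn $\myvec{x}$ into a \emph{basic} nonnegative solution whose support lies inside $\mathrm{supp}(\myvec{x})$, and then to bound the entries of that basic solution by Cramer's rule together with Hadamard's inequality; the second half is exactly the argument in \cite[p.~145]{chvatal83}.

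First I would normalise $\system$. Rewrite each constraint so its right-hand side is a constant, and add one nonnegative slack variable per inequality: a constraint $\sum_i a_i x_i \le b$ becomes $\sum_i a_i x_i + s = b$ with $s\ge 0$, and similarly with $-s$ for a $\ge$ constraint. This produces a system $A'\myvec{y}=\myvec{b'}$ of exactly $r$ equalities in $n'\le n+r$ variables, whose matrix $A'$ and right-hand side $\myvec{b'}$ still have integer entries of size at most $l$ (the new entries are $\pm 1$; the constants $b$ are among the given coefficients). Padding $\myvec{x}$ with the forced slack values gives a nonnegative solution $\myvec{y}$ of the new system, and the restriction of $\mathrm{supp}(\myvec{y})$ to the first $n$ coordinates is exactly $\mathrm{supp}(\myvec{x})$.

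Next comes the support-reduction loop. As long as the columns of $A'$ indexed by $\mathrm{supp}(\myvec{y})$ are linearly dependent, pick a nonzero $\myvec{d}$ with $\mathrm{supp}(\myvec{d})\subseteq\mathrm{supp}(\myvec{y})$ and $A'\myvec{d}=\myvec{0}$; after possibly replacing $\myvec{d}$ by $-\myvec{d}$, the largest $t\ge 0$ with $\myvec{y}-t\myvec{d}\ge\myvec{0}$ is finite, and at that $t$ the vector $\myvec{y}-t\myvec{d}$ is again a nonnegative solution, has strictly smaller support, and that support is still contained in $\mathrm{supp}(\myvec{y})$. Since the support is finite and shrinks at each step, the loop terminates at a nonnegative solution $\myvec{y^*}$ whose support columns in $A'$ are linearly independent; in particular $|\mathrm{supp}(\myvec{y^*})|\le r$ and $\mathrm{supp}(\myvec{y^*})\subseteq\mathrm{supp}(\myvec{y})$. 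Letting $\myvec{x^*}$ be the first $n$ coordinates of $\myvec{y^*}$ gives property~(1) at once, and property~(3) because $x^*_i>0\Rightarrow y_i>0\Rightarrow x_i>0$.

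Finally the size bound. Put $S=\mathrm{supp}(\myvec{y^*})$ and $m=|S|\le r$; the case $\myvec{x^*}=\myvec{0}$ is immediate, so assume $m\ge 1$. Since the columns of $A'$ indexed by $S$ are independent, choose $m$ rows for which the $m\times m$ integer submatrix $B$ is nonsingular; restricting those equalities yields $B\,\myvec{y^*}_S=\myvec{b'}$ on the chosen rows, so by Cramer's rule each nonzero entry of $\myvec{x^*}$ equals $\det(B^{(i)})/\det(B)$, where $B^{(i)}$ is $B$ with one column replaced by the corresponding part of $\myvec{b'}$. Both $B$ and $B^{(i)}$ are integer matrices with entries of absolute value below $2^{l}$, so Hadamard's inequality gives $|\det(B)|,|\det(B^{(i)})|\le (\sqrt{m}\,2^{l})^{m}\le r^{r/2}2^{lr}$, hence each has binary length at most $\tfrac{r}{2}\log_2 r + lr + 1$. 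Writing $x^*_i$ in lowest terms only shortens numerator and denominator, so its size is at most $2\bigl(\tfrac{r}{2}\log_2 r + lr + 1\bigr)=r\log_2 r + 2lr + 2\le 2\bigl(r\cdot l + r\cdot\log_2 r + 1\bigr)$, which is property~(2).

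The genuinely delicate point is keeping the three requirements mutually compatible: the reduction loop must be run until the support columns are \emph{linearly independent} (so that Cramer's rule applies), the support must only ever shrink inside $\mathrm{supp}(\myvec{x})$ (so that~(3) survives), and the slack reformulation must not push any coefficient past size $l$ (so that~(2) goes through unchanged). The determinant estimate itself is then a routine application of Hadamard's inequality.
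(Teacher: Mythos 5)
Your reconstruction of the closed-constraint case is sound and is essentially the argument the paper delegates to \cite[p.~145]{chvatal83}: pass to standard form with slack variables, shrink the support until the corresponding columns are linearly independent (which never enlarges the support, giving properties (1) and (3)), and bound the resulting basic solution via Cramer's rule and Hadamard's inequality. The arithmetic in your size estimate checks out. Note that the paper itself gives no proof of this theorem --- it only cites Chv\'atal and Theorem 5.1.5 of \cite{kok16phd} --- so the comparison can only be against the standard cited argument.

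There is, however, a genuine gap: as the theorem is used in this paper it must cover \emph{strict} inequalities (the systems produced when marking applications of {\tabProb} satisfiable contain constraints $\sum_i x_i < s$ coming from negated $P_{\geq s}$ formulas), and your support-reduction loop can destroy strictness. Concretely, let $\system$ be $x_1+x_2+x_3=1$ together with $x_1<1$, with given solution $(1/3,1/3,1/3)$. Your normalisation gives $x_1+x_2+x_3=1$ and $x_1+s=1$ with $s\ge 0$, and one admissible run of the loop (first cancel $x_2$ against $x_3$, then follow the kernel direction that increases $x_1$) terminates at $(x_1,x_2,x_3,s)=(1,0,0,0)$, which violates $x_1<1$. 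The problem cannot be fixed merely by preferring which coordinate to drive to zero: for $x_1-x_2=0$, $x_1>0$ with solution $(1,1)$, every reduction to independent support columns collapses the slacked solution to $\myvec{0}$, so the loop must sometimes stop before independence is reached. The missing idea is to treat the slacks of the strict constraints as quantities to be kept bounded away from zero, for instance by maximising a fresh variable $w$ subject to $a_j\cdot x + w\le b_j$ for each strict constraint $j$, the non-strict constraints, $x_i=0$ for every $i$ outside the support of the given solution, and $w\le 1$: this linear program is feasible with positive value, a basic optimal solution is a vertex with at least $n-r$ tight constraints of the form $x_i=0$ (hence at most $r$ positive entries among $x_1,\dots,x_n$), optimality forces $w>0$ so the strict inequalities survive, and the Cramer--Hadamard computation you already did gives the same size bound. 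Without some such device your argument proves the statement only for equalities and non-strict inequalities, which is not the generality in which the paper applies it.
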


Now we can prove the small model property for
$\LPPOne$.

\begin{theorem}[Small Model Property for
$\LPPOne$]
\label{thm:smpLLP1}
If $A$ is $\LPPOneMeas$-satisfiable then it
is satisfiable in a model
$M = \langle U, W, H, \mu, v \rangle$
that satisfies the following properties:
\begin{enumerate}[label=(\arabic*)]
\item
$|U| \leq 2^{|A|}$ and
in every world of $U$ exactly one atom of $A$ holds.
\item
For every $w \in U$ the following holds:
\begin{enumerate}
\item
$W_w = U$ and $H_w$ is the powerset of $U$~.
\item
For every $u \in W_w$,
$\mu_w(\{ u \}) \leq
2 \cdot \big ( |A| \cdot || A || + |A| \cdot \log_2 (|A|) + 1\big )$ and $\mu_w(\{ u \}) \in \Rat$.
\item
For every $V \in H_w$:
$\mu_w (V) = \sum_{u \in V} \mu_w(\{ u\})$~.
\item
\label{enum:polynom_pos_prob}
The number of $u$'s
such that $\mu_w(\{ u\})>0$, is at most $|A|$.
\end{enumerate}
\end{enumerate}
\end{theorem}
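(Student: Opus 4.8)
The plan is to filter an arbitrary measurable model of $A$ by the atom of $A$ that each of its worlds realizes --- this bounds the number of worlds by $2^{|A|}$ --- and then, at each surviving world, to discard the ambient probability measure and replace it by a short, bounded-size one produced by \Fref{thm:linear_system_thm}. Concretely, assume $M' = \langle U', W', H', \mu', v' \rangle$ is a measurable model with $M', w_0 \models A$, let $A_1, \dots, A_k$ (with $k \le |A|$) be the subformulas of $A$, and for $u \in U'$ write $\alpha(u)$ for the atom of $A$ that holds at $u$. Put $U := \{ \alpha(u) : u \in U' \}$, so that $|U| \le 2^{k} \le 2^{|A|}$, and for each atom $\gamma \in U$ fix a world $w_\gamma \in U'$ with $\alpha(w_\gamma) = \gamma$. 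Each $\gamma \in U$ becomes a world of the new model $M$ by setting $W_\gamma := U$, $H_\gamma := \powerset(U)$ and $v_\gamma(p) := \true$ iff $p$ is an atomic subformula of $A$ occurring positively in $\gamma$ (so $M$ is automatically measurable). What is left is to define the probabilities $\mu_\gamma$.

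Fix $\gamma \in U$ and write $w := w_\gamma$. For every atom $\delta$ of $A$ set $q_\delta := \mu'_w\big([\delta]_{M',w}\big)$; by measurability each $[\delta]_{M',w}$ lies in $H'_w$, the sets $[\delta]_{M',w}$ (over all atoms $\delta$) are pairwise disjoint with union $W'_w$, hence $\sum_\delta q_\delta = 1$, and for every subformula $B$ of $A$ we get $\mu'_w\big([B]_{M',w}\big) = \sum_{B \in \delta} q_\delta$, the sum being over the atoms $\delta$ that contain $B$ as a conjunct; moreover $q_\delta = 0$ unless $\delta \in U$. Form the linear system $\system_w$ in the variables $(x_\delta)_\delta$ consisting of $\sum_\delta x_\delta = 1$ together with, for each subformula of $A$ of the form $P_{\ge s} B$, the constraint $\sum_{B \in \delta} x_\delta \ge s$ if $M', w \models P_{\ge s} B$ and the strict constraint $\sum_{B \in \delta} x_\delta < s$ otherwise. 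Clearing the denominator of $s$ in each such constraint makes all coefficients integers of size at most $||A||$, and the number of (in)equalities is at most $|A|$: one per probabilistic subformula, plus the normalization, and since the deepest subformula of $A$ is not probabilistic there are at most $k - 1$ probabilistic subformulas. The vector $\myvec{q} = (q_\delta)_\delta$ is a non-negative solution of $\system_w$, so \Fref{thm:linear_system_thm}, applied with at most $|A|$ constraints and coefficient size $||A||$, yields a solution $\myvec{x^*}$ of $\system_w$ with at most $|A|$ positive entries, each a non-negative rational of size at most $2\big(|A| \cdot ||A|| + |A| \cdot \log_2 |A| + 1\big)$, and with $x^*_\delta = 0$ whenever $q_\delta = 0$ --- in particular $x^*_\delta = 0$ for $\delta \notin U$. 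Define $\mu_\gamma(\{\delta\}) := x^*_\delta$ for $\delta \in U$ and extend it additively over $\powerset(U)$; as $\sum_{\delta \in U} x^*_\delta = 1$ this is a probability measure, and conditions (2)(a)--(d) hold by construction.

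It remains to prove the filtration lemma: for every subformula $B$ of $A$ and every $\gamma \in U$, $M, \gamma \models B$ iff $M', w_\gamma \models B$ (equivalently, iff $B$ is a conjunct of the atom $\gamma$), which goes by induction on $B$. The propositional cases are immediate from the definition of $v_\gamma$ and the induction hypothesis, using that each atom contains, for every subformula position, exactly one of that subformula and its negation. For $B = P_{\ge s} C$, the induction hypothesis applied at each $\delta \in U$ gives $[C]_{M,\gamma} = \{ \delta \in U : C \in \delta \}$, so $\mu_\gamma\big([C]_{M,\gamma}\big) = \sum_{C \in \delta} x^*_\delta$ with $\myvec{x^*}$ the solution of $\system_{w_\gamma}$ used to define $\mu_\gamma$; since $\myvec{x^*}$ satisfies the constraint of $\system_{w_\gamma}$ attached to $P_{\ge s} C$, this sum is $\ge s$ precisely when that constraint is the non-strict one, i.e. precisely when $M', w_\gamma \models P_{\ge s} C$. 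Instantiating the lemma at $\gamma = \alpha(w_0)$ with $B = A$ yields $M, \alpha(w_0) \models A$, so $A$ is satisfiable in $M$; and the lemma, applied to the conjuncts of each atom, shows that in every world $\gamma \in U$ exactly the atom $\gamma$ holds, completing condition (1).

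Most of the argument is routine bookkeeping; the step that needs genuine care is tuning $\system_w$ so that \Fref{thm:linear_system_thm} returns exactly the bounds claimed in (2)(b) and (2)(d) --- this is what forces $\system_w$ to have at most $|A|$ constraints with integer coefficients of size at most $||A||$, and it is why the collapse to atoms (so that the variables range over the at most $2^{|A|}$ atoms while the constraints range only over the subformulas of $A$) must be carried out first. The second thing to keep an eye on is that the constraints coming from probabilistic subformulas \emph{false} at $w$ are strict inequalities; they are placed verbatim into $\system_w$ and are retained by \Fref{thm:linear_system_thm}, whose conclusion (1) asserts that $\myvec{x^*}$ is again a solution of $\system_w$, and this is exactly what prevents any such formula from becoming true in $M$.
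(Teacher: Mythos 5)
Your construction — filtration of an arbitrary measurable model through the atoms of $A$, followed by replacing each world's measure with a small non-negative solution of the induced linear system via \Fref{thm:linear_system_thm}, and a truth-preservation induction on subformulas — is correct and is exactly the argument the paper alludes to (it only cites the construction in Lemma 5.3.6 of \cite{kok16phd}, whose key step for property (2)(d) is the same application of \Fref{thm:linear_system_thm}). The bookkeeping (at most $|A|$ constraints with coefficients of size at most $||A||$, zeroing of atoms not realized in the original model, and retention of the strict inequalities) is handled properly, so I have nothing to add.
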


\begin{proof}
In the proof of Lemma 5.3.6 of \cite{kok16phd}
a model for $A$ that satisfies the conditions of the 
theorem is constructed.
The most interesting property of the small model
is \ref{enum:polynom_pos_prob}, which can be
proved by an application of
\Fref{thm:linear_system_thm}.
\end{proof}

\begin{theorem}
\label{thm:LPP1_lower_bound}
The $\LPPOneMeas$-satisfiability
problem is $\pspace$-hard.
\end{theorem}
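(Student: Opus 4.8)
The plan is to prove $\pspace$-hardness by a polynomial-time reduction from the satisfiability problem of modal logic $\D$ (the logic of serial Kripke frames), which is well known to be $\pspace$-complete, hence $\pspace$-hard. The guiding observation is that in a probability space the operator $P_{\geq 1}$ behaves like the $\Box$ of $\D$: since the state set is non-empty one never has both $\mu(X) = 1$ and $\mu(W \setminus X) = 1$ (the $\D$-axiom), and once the measure is concentrated on finitely many points of positive weight, ``measure one'' coincides with ``true at every point of positive weight''. I would fix the modal language built from $\Prop$ by $\lnot$, $\land$ and $\Box$, and define a translation $(\cdot)^{\star}$ into $\LLPPOne$ that is the identity on $\Prop$, commutes with $\lnot$ and $\land$, and satisfies $(\Box A)^{\star} = P_{\geq 1} A^{\star}$. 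This map is computable in linear time and $|A^{\star}| = O(|A|)$, so it remains to show that $A$ is $\D$-satisfiable if and only if $A^{\star}$ is $\LPPOneMeas$-satisfiable.

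For the direction from $\D$ to $\LPPOne$, I would start from a serial Kripke model $K = \langle U, R, V\rangle$ satisfying $A$ at some world; by the finite model property of $\D$ we may assume $U$ is finite. Turn $K$ into an $\LPPOne$-model $M = \langle U, W, H, \mu, v\rangle$ on the same carrier by setting, for each $w \in U$: $W_w = \{u \in U : wRu\}$ (non-empty by seriality, finite since $U$ is finite), $H_w = \powerset(W_w)$, $\mu_w$ the uniform distribution on $W_w$, and $v_w(p) = \true$ iff $w \in V(p)$. Each $\langle W_w, H_w, \mu_w\rangle$ is a probability space, and $M$ is measurable because $H_w$ is a full powerset. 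A routine induction on the modal formula $B$ yields $M, w \models B^{\star}$ iff $K, w \models B$ for every $w \in U$; the only case that needs thought is $B = \Box C$, where $M, w \models P_{\geq 1} C^{\star}$ amounts to $\mu_w(\{u \in W_w : M, u \models C^{\star}\}) = 1$, which for a uniform measure on a finite non-empty set says exactly that $C^{\star}$ holds at every $R$-successor of $w$, i.e.\ $K, w \models \Box C$ by the induction hypothesis. Hence $M, w \models A^{\star}$.

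For the converse direction I would use the small model property (\Fref{thm:smpLLP1}): if $A^{\star}$ is $\LPPOneMeas$-satisfiable then it holds at some world of a model $M = \langle U, W, H, \mu, v\rangle$ in which, for each $w$, $W_w = U$ is finite, $H_w = \powerset(U)$, and $\mu_w(X) = \sum_{u \in X} \mu_w(\{u\})$ for all $X \subseteq U$; in particular $\sum_{u \in U} \mu_w(\{u\}) = 1$, so some $u$ has $\mu_w(\{u\}) > 0$. Define a Kripke model $K = \langle U, R, V\rangle$ by letting $wRu$ iff $\mu_w(\{u\}) > 0$ and $V(p) = \{w \in U : v_w(p) = \true\}$; then $R$ is serial. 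An induction on $B$ gives $K, w \models B$ iff $M, w \models B^{\star}$, the heart of the argument being that $\mu_w(\{u : M, u \models C^{\star}\}) = 1$ if and only if every $u$ with $\mu_w(\{u\}) > 0$ satisfies $C^{\star}$: if some such $u_0$ fails $C^{\star}$ then $\{u : M, u \models C^{\star}\} \subseteq U \setminus \{u_0\}$, so its measure is at most $1 - \mu_w(\{u_0\}) < 1$; conversely, if every $u$ of positive weight satisfies $C^{\star}$ then the set has measure at least $\sum_{u : \mu_w(\{u\}) > 0} \mu_w(\{u\}) = 1$. Thus $A$ is satisfied in a serial Kripke model.

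Combining the two directions gives the reduction, and $\pspace$-hardness of $\D$-satisfiability transfers to $\LPPOneMeas$-satisfiability, which proves the theorem. I expect the main obstacle to be the converse direction, which is also where the small model property is genuinely needed rather than merely convenient: in an arbitrary measurable $\LPPOne$-model an event can have measure one while omitting some accessible world (for example when $\mu_w$ is atomless), so there $P_{\geq 1}$ does not literally behave like $\Box$; restricting to models whose measures are finitely supported on points of positive weight is precisely what makes $P_{\geq 1}$ and $\Box$ correspond.
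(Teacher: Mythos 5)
Your proposal is correct and follows essentially the same route as the paper: a reduction from modal logic $\D$ via the translation $\Box \mapsto P_{\geq 1}$, using the finite model property of $\D$ together with uniform measures on successor sets for one direction, and the small model property (\Fref{thm:smpLLP1}) with the accessibility relation $wRu \Leftrightarrow \mu_w(\{u\}) > 0$ for the other. The paper's proof is just a sketch of exactly these steps, so your additional detail (the induction on formulas and the explicit verification of seriality and of the equivalence between ``measure one'' and ``true at all points of positive weight'') only fills in what the paper leaves to the reader.
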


\begin{proof}
Since probability spaces are non-empty
sets it makes sense to draw
a reduction from modal logic $\D$,
which is complete for serial Kripke structures.
Let $A$ be a modal formula and let $f(A)$
be the $\LLPPOne$-formula that is obtained by
replacing any occurrence of $\Box$ in $A$
with $P_{\geq 1}$. We will prove that
$A$ is satisfiable iff $f(A)$ is
$\LPPOneMeas$-satisfiable.

Assume that $A$ is satisfiable. Then $A$
is satisfiable in a finite model~\cite{halmos92}.
We can create an $\LPPOneMeas$-model
where the probability space of each world $w$
consists of the worlds accessible to $w$
and $w$ assigns a uniform probability
to each of these worlds. Then we can prove
that $f(A)$ is satisfied in this
$\LPPOneMeas$-model.

Assume that $f(A)$ is satisfiable. Then it
is satisfiable in a model that has the
properties of \Fref{thm:smpLLP1}. We define
a Kripke model where $u$ is accessible from $w$,
if $\mu_w(\{ u\}) > 0$. Then we can prove
that $A$ is satisfiable in this Kripke model.
\end{proof}

\section{Adding Justifications}

Justification logics are modal logics
that use explicit terms instead of the modality
$\Box$. The terms are constructed according to
the grammar
$t :: = c ~|~ x ~|~ (t \cdot t) ~|~!t$
where $c$ is a constant and $x$ is a variable. 
$\Tm$ denotes the set of all terms.
For $t \in \Tm$ and any non-negative integer~$n$
we define:
$!^0 t := t$
and
$!^{n+1}t := {!} ~ ({!^n} t)$.
The language of justification logic, $\LJ$,
is defined by the grammar
$A :: = p ~|~ \lnot A ~|~ A \land A ~|~ t: A$
where $t \in \Tm$ and $p \in \Prop$. 
For this presentation we take
$\axJ$, i.e. $\vdash u : (A \to B) \to
( v : A \to u \cdot v : B )$
as the 
only axiom of the logic $\J$.
The logic $\J$ is defined\footnote{This paper
aims at illustrating the combination
of justification logic and probabilistic 
logic. Therefore, we consider it
useful to study the smallest possible
framework. As a consequence we present
a variant of logic $\J$ without the operator
$+$ and with the maximal constant
specification. Other features of
justification logic, like the term
operator $+$, other justification axioms etc.
can be added to our framework without
complications.} by taking
a system for classical propositional logic, 
the axiom $\J$ and the rule $\ANE$:
\[
\vdash {!^{n}} c : {!^{n-1}} c : \cdots : {!c} : c : A,
\text{ where $c$ is a constant, $A$
is an axiom-instance and } n \in \Nat~.
\]

Semantics for $\J$ are given by $M$-models.

\begin{definition}[$M$-model]
An $M$-model is a pair $\langle v, \evid
\rangle$, where $v: \Prop \to \{\true, \false\}$
and $\evid: \Tm \to \powerset(\LJ)$
such that for every
$u, v \in \Tm$, for a constant $c$ and $A \in \LJ$ 
we have:
\begin{enumerate}
\item
$\big( A \to B \in \evid (u) \text{ and }
A \in \evid (v) \big)
\Longrightarrow B \in \evid (u \cdot v)$~;
\item
if $c$ is a constant, $A$ an axiom
and $n \in \Nat$ then 
${!^{n-1} c} : {!^{n-2}} c: \cdots : !c : c : A \in 
\evid (!^n c)$.
\end{enumerate}
\end{definition}

The language of probabilistic justification
logic is defined as a combination
of $\LJ$ and $\LLPPOne$:
$A :: = p ~|~ \lnot A ~|~ A \land A ~|~ t: A
~|~ P_{\geq s} A$.
So $\PPJ$~\cite{koogst} is
defined by taking the axioms and rules
of $\LPPOne$ together with the axiom and rules
of $\J$ ($\ANE$ can now be applied
to probabilistic axioms instances too). 
A measurable model for probabilistic justification
logic is defined by replacing the truth
assignment in~\Fref{def:LPP1model} with an $M$-model.
The class of measurable models is $\PPJMeas$.
Satisfiability in $\PPJMeas$ is defined
by adding the line
$M, w \models t : A \Longleftrightarrow
A \in \evid_w (t)$
in \Fref{def:LPPOne_meas_sat}, where
$\evid_w$ is the evidence function that
corresponds to the M-model assigned to the world
$w$. Soundness and completeness of probabilistic
logics with respect to
measurable models is proved in~\cite{kok16phd}.
\Fref{thm:smpLLP1} holds for $\PPJ$ as
well~\cite{kok16phd}.

\section{The Tableau Procedure}

Our tableaux are trees where the nodes  are
formulas prefixed with world and truth signs.
So, the node $w~T~A$ ($w~F~A$) intuitively
means that formula $A$ is true 
(resp. false) at world $w$.
A branch is a path
that starts at a result of
an application of the rule $\tabProb$ (defined later)
or at the root and ends at the premise of
an application of the rule $\tabProb$
or at a leaf.
A branch is called closed if it contains
both $w~T~A$ and $w~F~A$ for some $A$. Otherwise it is called
open. A branch is called complete if no rule is applicable
in this branch. Otherwise it is called incomplete.
The only rule
that can create new worlds in our
tableaux is the rule $\tabProb$.
For this reason we can assign
a world to each branch (of course the
same world may be assigned to several
branches). So, $b_w$ denotes a
branch where all the formulas
are prefixed with $w$.
We will use the abbreviation ``$w~T~A$''
(``$w~F~A$'') to denote that
the node $w~T~A$ ($w~T~A$) appears in the 
tableau.
Our tableau rules are the rules for
classical propositional logic and the rule
$\tabProb$: 

\begin{prooftree}
\AxiomC{$b_w$}
\myLL{\tabProb}
\UnaryInfC{$w.1 ~ T~ \pm B_{11} \land \ldots \land \pm B_{1m} ~|~ \cdots ~|~ w.n ~ T~ \pm B_{n1} \land \ldots \land \pm B_{nm} $}
\end{prooftree}

In rule $\tabProb$ the $w.i$'s are new world prefixes and for all $i, j$, ``$w~T~P_{\geq s_{ij}} B_{ij}$'' or ``$w~F~P_{\geq s_{ij}} B_{ij}$''. In our
tableaux we treat formulas starting with
a justification term as atomic formulas.
In other words, no rule can be applied to a formula
of the form $t:A$. 
The tableau procedure consists of two parts:
first we apply the rules and then we mark worlds
and applications of $\tabProb$ satisfiable. 
Assume that $A$ is a formula that we want to test
for satisfiability. We take
$w~T~A$ as the root of the tableau and
then we apply the following steps:
\begin{enumerate}
\item
\label{enum:tab_rule_prop}
Apply the propositional rules for
as long as possible.
If there exists an open branch that
contains $w~T~P_{\geq s} B$ or
$w~F~P_{\geq s} B$ for some $s$ and $B$ 
then we go to
step \ref{enum:tab_rule_prob}.
Otherwise stop.
\item
\label{enum:tab_rule_prob}
Apply the probabilistic tableau rule to
every open branch. Then go to 
step~\ref{enum:tab_rule_prop}.
\end{enumerate}

The second part of the tableau procedure consists
of a method for marking worlds and
applications of $\tabProb$ satisfiable.
In order to mark
worlds satisfiable we traverse the tree from
the leaves to the root and we make sure that
the justification and the probabilistic 
restrictions are satisfied. In order to check
the probabilistic constraints we have
to mark applications of {\tabProb} as satisfiable
as well.

\textbf{Marking Worlds Satisfiable.}
Let $b_w$ be one of the branches that
correspond to world $w$. In order
to check that ``justification constraints''
hold in $b_w$
we have to extend the satisfiability algorithm for
justification logic $\J$~\cite{Kuz00CSLnonote}
in the probabilistic context.
The algorithm of \cite{Kuz00CSLnonote}
checks that if ``$w~F~t:A$''
then $A \notin \evid(t)$ where
$\langle v, \evid \rangle$ is the minimum
$M$-model that is defined by
the formulas $u:B$ such that ``$w~T~u:B$''.
This algorithm 
uses a procedure for unifying
axiom schemata of justification logic. In order
to extend this algorithm to the probabilistic
setting we have to extend the unification
to probabilistic axiom schemata.
These axioms come with some linear side
conditions~(see~\Fref{tab:LPP1}), so their
unification will create a linear system.
The unification algorithm then succeeds if this
linear system is satisfiable.
For more details  see Lemma 5.3.3 of~\cite{kok16phd}.
Now $w$ will be marked satisfiable if
there exists an open $b_w$
such that the extended algorithm
for justification satisfiability holds
and either it
is a complete branch or it ends in
an application of {\tabProb} and this
application is marked satisfiable.

\textbf{Marking Applications of {\tabProb}
Satisfiable.}
Let $\rho$ be an application of the rule {\tabProb}
on branch $b_w$.
We associate variables $x_i$ with every
world $w.i$, even if $w_i$ is marked not
satisfiable. The $x_i$ corresponds to the probabilities
that world $w$ assigns to $w_i$ in
a small model for $A$ (i.e. $x_i = \mu_w(\{ w.i\}
)$ in the sense of \Fref{thm:smpLLP1}).
We mark $\rho$ satisfiable
if the following linear system is solvable:

\begin{align*}
\sum^{n}_{i=1}  & x_{i} = 1\\
(\forall 1 \leq i \leq n )  & \big [ x_{i} \geq 0 \big ]\\
\text{if  ``$w ~T~P_{\geq s} C$'' then}  &
\sum_{\{i | \text{``$w.i ~T~ C$''}\}} x_{i} \geq s\\
\text{if  ``$w ~F~P_{\geq s} C$'' then} & \sum_{\{i | \text{``$w.i ~T~ C$''} \} } x_{i} < s~.
\end{align*}

If the initial formula $A$ belongs to a world that is
marked satisfiable then we return
satisfiable. After an application
of a propositional rule the
length of the formula decreases. After
an application of the probabilistic rule
the nesting depth of probabilistic operators decreases.
Hence, our tableau procedure terminates.
By the procedure of marking worlds
satisfiable and by \Fref{thm:smpLLP1}
we get the following theorem.

\begin{theorem}
Let $A$ be a $\PPJ$-formula. 
The tableau method
returns $A$ is satisfiable iff
$A$ is satisfiable in a measurable model.
\end{theorem}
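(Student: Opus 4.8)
The plan is to prove the two implications of the biconditional separately.

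\textbf{Soundness: if the tableau reports ``satisfiable'' then $A$ has a measurable model.} Suppose the root world $w_{0}$, carrying $w_{0}~T~A$, gets marked satisfiable. I would construct a measurable model $M=\langle U,W,H,\mu,v\rangle$ by recursion down the tableau tree. For each world $w$ that is marked satisfiable, fix a witnessing open branch $b_{w}$ as in the marking procedure; let $v_{w}$ be the valuation read off from the signed atomic nodes of $b_{w}$, and let $\evid_{w}$ be the minimal evidence function generated by the formulas $u:B$ for which ``$w~T~u:B$'' occurs on $b_{w}$, exactly the minimal $M$-model used by the algorithm of~\cite{Kuz00CSLnonote}. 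If $b_{w}$ is complete, give $w$ a trivial probability space; if $b_{w}$ ends in a $\tabProb$ application $\rho$, fix a solution $(x_{i})_{i}$ of the linear system of $\rho$ that assigns positive probability only to children which are themselves marked satisfiable (consistent, since children are processed before $\rho$ in the bottom-up marking), and set $W_{w}=\{\,w.i \mid x_{i}>0\,\}$, $H_{w}=\powerset(W_{w})$, $\mu_{w}(\{w.i\})=x_{i}$, extended additively. The recursion is well founded since each such $w.i$ is again marked satisfiable. The core of the argument is a \emph{truth lemma}: for every $w$ appearing in $M$ and every formula $C$, ``$w~T~C$'' on $b_{w}$ implies $M,w\models C$, and ``$w~F~C$'' on $b_{w}$ implies $M,w\not\models C$. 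I would prove it by induction on the maximal nesting depth of probability operators in $C$, and within a fixed depth on the structure of $C$. The propositional cases follow from the ordinary tableau rules, openness of $b_{w}$ ruling out clashes; the case $C=t:D$ uses both directions of the justification check --- correctness of the probabilistically extended unification procedure, \cite[Lemma 5.3.3]{kok16phd}, gives $D\notin\evid_{w}(t)$ whenever ``$w~F~t:D$'' occurs, whereas $D\in\evid_{w}(t)$ whenever ``$w~T~t:D$'' occurs, by construction of $\evid_{w}$; and for $C=P_{\geq s}D$ one uses that at the $\tabProb$ step below $b_{w}$ each child $w.i$ carries exactly one of ``$w.i~T~D$'' or ``$w.i~F~D$'', so the inductive hypothesis yields $[D]_{M,w}=\{\,w.i\in W_{w}\mid \text{``$w.i~T~D$''}\,\}$ and hence $\mu_{w}([D]_{M,w})=\sum_{\{i\mid \text{``$w.i~T~D$''}\}}x_{i}$, which the linear system forces to be $\geq s$ or $<s$ according to the sign of $P_{\geq s}D$ at $w$. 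Applying the truth lemma at $w_{0}$ gives $M,w_{0}\models A$, and $M$ is measurable because every $H_{w}$ is a full powerset.

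\textbf{Completeness: if $A$ has a measurable model then the tableau reports ``satisfiable''.} By \Fref{thm:smpLLP1} (which holds for $\PPJ$) we may fix a model $M$ with $W_{w}=U$ and $H_{w}=\powerset(U)$ for all worlds, in which exactly one atom of $A$ is true in each world, and a world $r\in U$ with $M,r\models A$. I would run the tableau while maintaining, along a distinguished family of branches, a map $g$ from tableau worlds to worlds of $M$ with $g(w_{0})=r$ and the invariant that ``$w~T~C$'' implies $M,g(w)\models C$ and ``$w~F~C$'' implies $M,g(w)\not\models C$. Propositional rules preserve the invariant if one always follows the branch that agrees with $M$ at $g(w)$; such a branch is open --- a clash would contradict $M$ assigning a definite value --- and passes the justification check, since the minimal evidence function generated by the true $u:B$-nodes on the branch is contained in $\evid_{g(w)}$ and hence omits every $D$ with ``$w~F~t:D$''. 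At a $\tabProb$ application on a world $w$ with $g(w)=u$, each $u'\in U$ picks out the unique child $w.i$ whose sign pattern over $B_{i1},\dots,B_{im}$ matches the truth values of those formulas at $u'$; I define $g$ on each such child by choosing one such $u'$, and put $x_{i}=\mu_{u}(\{\,u'\in U\mid u'\text{ matches child }w.i\,\})$. Then $\sum_{i}x_{i}=\mu_{u}(U)=1$ and each $x_{i}\geq 0$, and since $[C]_{M,u}$ is exactly the union of the classes of children carrying ``$w.i~T~C$'' for each $C$ among the $B_{ij}$, the constraints $\mu_{u}([C]_{M,u})\geq s$ and $\mu_{u}([C]_{M,u})<s$ dictated by the signs of the probabilistic nodes at $u$ become precisely the inequalities of the linear system, which is therefore solvable, so $\rho$ is marked satisfiable. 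The children that receive a $g$-image are treated recursively in the same way and are thus marked satisfiable; the maximal nesting depth of probability operators strictly decreases at each $\tabProb$ step, so the recursion terminates, and propagating the markings up the tree makes $w_{0}$ marked satisfiable, i.e.\ the tableau reports ``satisfiable''.

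\textbf{Expected main obstacle.} The delicate point should be the probabilistic case of the truth lemma and the bookkeeping around it: at a $\tabProb$ application one must guarantee that the set of children indexing the linear-system sum for a formula $P_{\geq s}D$ coincides \emph{exactly} with the event $[D]_{M,w}$. This forces care that the subformulas $B_{ij}$ under probability operators are handled uniformly at the parent and that the propositional decomposition inside each child never disturbs the sign already fixed for $D$; on the soundness side one must also ensure the recursive model construction never gives positive probability to a child that is not marked satisfiable, so that it does not stall. The justification component is comparatively routine once correctness of the probabilistically extended unification algorithm of~\cite{kok16phd} is granted, and the propositional component is entirely standard.
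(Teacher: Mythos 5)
Your overall architecture --- soundness by reading a measurable model off the marked tableau via a truth lemma, completeness by using \Fref{thm:smpLLP1} to steer the tableau and to witness solvability of the linear systems --- is exactly what the paper's one-sentence justification (``by the procedure of marking worlds satisfiable and by \Fref{thm:smpLLP1}'') gestures at, and your completeness half, the propositional cases, and the justification cases are sound.

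The genuine gap is the step in the soundness direction where you ``fix a solution $(x_i)_i$ of the linear system of $\rho$ that assigns positive probability only to children which are themselves marked satisfiable'', justified only by the remark that children are processed before $\rho$ in the bottom-up marking. Processing order does not give you this. The marking procedure associates a variable $x_i$ with every child \emph{even if $w.i$ is marked not satisfiable} and imposes no constraint $x_i=0$ on the unsatisfiable ones, so $\rho$ can be marked satisfiable although every solution of its system puts positive weight on a dead child, and your recursive model construction then stalls in precisely the way you list as an ``expected obstacle'' without resolving it. Concretely, for $A=P_{\geq 1}p\land P_{\geq 1}\lnot p$ the rule $\tabProb$ produces four children indexed by the sign patterns over $\{p,\lnot p\}$; the child with both signs positive decomposes to $w.1~T~p$ and $w.1~F~p$ and is closed, yet it carries both ``$w.1~T~p$'' and ``$w.1~T~\lnot p$'', so setting $x_1=1$ and the remaining $x_i=0$ solves $\sum_i x_i=1$, $\sum_{\{i\mid\text{``}w.i~T~p\text{''}\}}x_i\geq 1$ and $\sum_{\{i\mid\text{``}w.i~T~\lnot p\text{''}\}}x_i\geq 1$. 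The tableau would then answer ``satisfiable'' although $A$ has no measurable model (disjoint events cannot both have measure $1$). The repair is to add the constraints $x_i=0$ for every child not marked satisfiable (equivalently, to run the system over the satisfiable children only); with that amendment the solution you need exists by the very definition of ``$\rho$ is marked satisfiable'', the probabilistic case of your truth lemma closes because the zero-weight children contribute nothing to $\mu_w([D]_{M,w})$, and your completeness direction is unaffected, since the guiding small model only ever sends worlds of $M$ to children that will themselves be marked satisfiable. This constraint must be stated explicitly; as written, both the procedure and your proof leave it out, and the theorem is false without it.
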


\begin{theorem}
The satisfiability problem for $\PPJ$ is
$\pspace$-complete.
\end{theorem}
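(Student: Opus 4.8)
The plan is to prove the two directions separately: $\pspace$-hardness by reduction from the logic over classical propositional logic, and membership in $\pspace$ by showing that the tableau procedure of this section can be run in polynomial space. For hardness, recall from \Fref{thm:LPP1_lower_bound} that $\LPPOneMeas$-satisfiability is $\pspace$-hard, and note that every $\LLPPOne$-formula is a $\PPJ$-formula. So it suffices to check that a justification-free formula $A$ is $\LPPOneMeas$-satisfiable iff it is $\PPJMeas$-satisfiable; then the identity map is the required reduction. For the backward direction, a $\PPJMeas$-model of $A$ yields an $\LPPOneMeas$-model by forgetting the evidence functions, which changes the truth value of no justification-free formula and preserves measurability at the justification-free formulas. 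For the forward direction I would first invoke \Fref{thm:smpLLP1} to obtain an $\LPPOneMeas$-model of $A$ in which $W_w = U$ and $H_w$ is the full powerset of $U$ at every world; adding at each world the least evidence function satisfying the two closure conditions of an $M$-model turns this into a $\PPJ$-model, which is automatically measurable because every subset of $U$ lies in $H_w$, and it still satisfies $A$ since $A$ is justification-free.

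For the upper bound, I would recast the tableau procedure as a recursive routine $\mathrm{Sat}(w,\Gamma)$ that decides whether a world carrying a signed-formula set $\Gamma$ is marked satisfiable; by the correctness theorem for the tableau method stated just above, together with completeness of $\PPJ$ with respect to $\PPJMeas$, a polynomial-space implementation of this routine gives membership in $\pspace$. The routine first saturates $\Gamma$ under the propositional rules; since each such application strictly shortens the principal formula, only polynomially many of them occur and each single root-to-leaf branch that must be held in memory has polynomial length, every node being a signed subformula of $A$ or a (sub)conjunction of atoms of $A$, hence of polynomial size. On a saturated open branch it checks closure and the justification constraints via the probabilistic extension of Kuznets' algorithm, which as explained in the text reduces to testing feasibility of a polynomial-size rational linear system and so costs only polynomial time.

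The remaining ingredient is the treatment of the rule $\tabProb$ when probabilistic formulas survive on the branch. Rather than expanding all of the (up to exponentially many) successor worlds $w.1,\dots,w.n$, the routine uses \Fref{thm:linear_system_thm}(1): if the linear system associated with $\rho$ is solvable at all, it is solvable by a vector with at most $r$ positive entries, where $r$ is the polynomial number of probability constraints at $w$. So $\mathrm{Sat}$ guesses which $\leq r$ atoms over the formulas in the scope of probability operators name the positive-probability successors, guesses their probabilities — rationals of polynomial bit-size, again by \Fref{thm:linear_system_thm}(2), as in \Fref{thm:smpLLP1} — verifies the resulting polynomial-size linear system (treating all other successors as carrying probability $0$), and recursively calls $\mathrm{Sat}$ on each guessed successor. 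Since each application of $\tabProb$ strictly decreases the nesting depth of probabilistic operators, the recursion depth is bounded by the probabilistic depth of $A$, which is at most $|A|$; each call stores only polynomially much data and sibling subcalls reuse the same space, so the whole computation runs in polynomial space (the bounded-size nondeterministic guesses are absorbed by $\np\text{SPACE}=\pspace$).

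The main obstacle is precisely the completeness of this restriction at $\tabProb$: a priori a world is satisfiable only if the linear system over \emph{all} of its exponentially many successors is solvable, which cannot naively be decided in polynomial space. This is exactly what \Fref{thm:linear_system_thm}(1) — and the small model property \Fref{thm:smpLLP1}, in which only polynomially many worlds carry positive probability — overcomes: solvability of the full system is equivalent to solvability by a vector with at most $r$ positive entries, so guessing the identities of those $r$ successors and their polynomial-size values is both sound and complete, and the worlds receiving probability $0$ need not be examined at all. Feasibility of all the linear systems that arise, including the strict inequalities contributed by $F$-signed probability formulas and the system produced by unifying the probabilistic axiom schemata, is decidable in polynomial time by standard linear programming and contributes only polynomial overhead. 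Combining the polynomial-space decision procedure with the $\pspace$-hardness reduction above yields $\pspace$-completeness.
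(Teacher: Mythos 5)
Your proposal is correct and follows essentially the same route as the paper: the lower bound is inherited from the $\pspace$-hardness of $\LPPOneMeas$-satisfiability (\Fref{thm:LPP1_lower_bound}), and the upper bound comes from a depth-first, space-reusing traversal of the tableau in which the exponential branching of $\tabProb$ is tamed by guessing only polynomially many positive-probability successors with polynomial-size rational probabilities, justified by \Fref{thm:linear_system_thm} and \Fref{thm:smpLLP1}\ref{enum:polynom_pos_prob}, and concluding via $\np\text{SPACE}=\pspace$. Your write-up is more explicit than the paper's (notably in spelling out why the identity map is a reduction from $\LPPOneMeas$- to $\PPJMeas$-satisfiability of justification-free formulas), but the decomposition and the key lemmas are the same.
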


\begin{proof}
The lower bound follows from
\Fref{thm:LPP1_lower_bound}. The upper bound
follows by the fact that we can traverse the
tableau for probabilistic justification logic
in a depth first fashion by reusing space.
Whether a world $w$ will be
marked satisfiable depends on the worlds
that appear below it in the tableau.
We only need a polynomial number of bits
that can be reused in order to
decide the satisfiability of the linear
systems and 
justification constraints. A complication
arises since rule {\tabProb} creates
exponentially many worlds. However, because of
\Fref{thm:smpLLP1}\ref{enum:polynom_pos_prob} in every
application of rule $\tabProb$ we can guess
a linear number of branches to which
we will assign non-zero probability.
We conclude that the depth first search operates
in non deterministic polynomial space.
\end{proof}

\noindent \textbf{Acknowledgements:}
The author is grateful to Antonis 
Achilleos and the anonymous referees
for many useful comments and to ERC
for financial support (project EPS $313360$).

\bibliographystyle{plain}

\end{document}